\title{Products in a Category with Only One Object}
\author{Rick Statman
\institute{Carnegie Mellon University}
\email{rs31@webmail.math.cmu.edu}
}
\newcommand\hookto{\mathrel{\tikz{
   \path [use as bounding box] (-.04,-0.09) rectangle (.254,.01);
   \draw [>->] (-.02,0) to +(.266,0);
}}}
\newcommand\hooktto{\mathrel{\tikz{
   \path [use as bounding box] (-.04,-0.09) rectangle (.254,.01);
   \draw [>->>] (-.02,0) to +(.266,0);
}}}
\newtheorem{theorem}{Theorem}
\newtheorem{lemma}{Lemma}
\newtheorem*{corollary*}{Corollary}
\newtheorem{proposition}{Proposition}
\theoremstyle{definition}
\newtheorem{definition}{Definition}
\begin{document}

\maketitle

\begin{abstract}
We consider certain decision problems for the free model of the theory of Cartesian monoids. We introduce a model of computation based on the notion of a single stack one-way PDA due to Ginsburg, Greibach and Harrison. This model allows us to solve problems such as:
\begin{enumerate}[label=(\arabic*)]
\item Given a finite set B of elements and an element F, is F a product of members of B?
\item Is the submonoid generated by the finite set B infinite?
\end{enumerate}
for certain fragments of the free Cartesian monoid. These fragments include the submonoid of
right invertible elements and so our results apply to the Thompson-Higman groups.
\end{abstract}

\section{INTRODUCTION AND PRELIMINARIES}

The notion of a Cartesian monoid has been rediscovered many times. A list of those just known
to me would include
Jonsson-Tarski Algebras~\cite{jonsson61},
Vagabond Groups~\cite{thompson80},
Cantor algebras~\cite{smirnov71},
FP~\cite{backus78},
Cartesian Monoids~\cite{lambek80},
Freyd-Heller Monoids~\cite{freyd93},
TOPS~\cite{statman96},
Thompson-Higman Monoids~\cite{birget09},
and CP Monoids~\cite{gray16}.
The title of this paper has been pirated from Gray \& Pardue; the reason for this will become apparent presently.

The notion of Cartesian Monoid axiomatizes the idea of a monoid of functions on a set $S$ supporting a surjective pairing function $S \times S \to S$ lifted pointwise to $(S \to S) \times (S \to S) \to (S \to S)$. So a Cartesian Monoid $C = (M, *, I, L, R, \langle-\rangle)$
is a monoid $(M, *, I)$ together with elements $L, R:M$ and a map \mbox{$\langle - \rangle: M \times M \to M$}
satisfying:
\begin{align*}
L* \langle F, G \rangle &= F &&\text{(left projection)}
\\
R* \langle F, G \rangle &= G &&\text{(right projection)}
\\
\langle F, G \rangle * H &= \langle F*H, G*H \rangle && \text{(pointwise lifting)}
\\
\langle L, R \rangle &= I &&\text{(surjectivity)}
\end{align*}
More generally, one could consider pairing functions which might not
be surjective. In this case, the fourth condition would be omitted. The authors of the most recent
rediscovery call such structures Categorical Quasiproduct Monoids (CQP)~\cite[Section~2.4]{ginsburg67}. We are happy to
credit them with directing our attention to them.

Here it will be useful to review some properties of the free Cartesian Monoid CM. The four conditions have the equivalent rewrite system
\begin{enumerate}[label=(\arabic*)]
\item $L * \langle X, Y \rangle \hookto X$
\item $R * \langle X, Y \rangle \hookto Y$
\item $\langle X, Y \rangle * Z \hookto  \langle X * Z, Y * Z \rangle$
\item $I * X \hookto X$
\item $X * I \hookto X$
\item $\langle L*X, R*X \rangle \hookto X$
\item $\langle L, R \rangle\hookto I$
\end{enumerate}
modulo the associativity axioms. This rewrite system is equivalent to the conditions in the sense that the smallest associative congruence containing the rules as identities is the congruence generated by the conditions (axioms). We denote the monotone, reflexive, transitive closure of $\hookto$ by $\hooktto$. Here
monotone means replacing subexpressions by their rewrites.

\begin{theorem}[See \cite{statman96}]
\begin{enumerate}[label=(\arabic*)]
\item Every pair of equivalent expressions have a common rewrite.
\item Every sequence of rewrites eventually terminates in a unique normal form.
\item All of the last three rewrites can be done at the end. 
\end{enumerate}
\end{theorem}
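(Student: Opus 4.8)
The plan is to treat the seven-rule system as a term-rewriting system \emph{modulo} the associativity equations and to establish, in this order, \emph{termination}, \emph{local confluence modulo associativity}, and finally a \emph{postponement} property, from which the three claims follow in a standard way. Claim (2) will come from termination together with confluence; claim (1) is precisely the Church--Rosser property (confluence, read as: equivalent expressions share a common reduct); and claim (3) is a reduction-reordering (standardization) result.

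For termination I would exhibit a strictly monotone interpretation into the positive integers that is invariant under associativity. Assigning $|I| = |L| = |R| = 2$, $|F * G| = |F|\cdot|G|$, and $|\langle F, G\rangle| = |F| + |G| + 1$, one checks first that $|(F*G)*H| = |F*(G*H)|$, so the measure is well defined modulo associativity, and then that every rule strictly decreases it. The only delicate rule is the distributive rule (3): here $|\langle X, Y\rangle * Z| = (|X|+|Y|+1)|Z|$ while the right-hand side has value $(|X|+|Y|)|Z| + 1$, and the difference is $|Z| - 1 > 0$ because every subterm receives value at least $2$. The remaining rules decrease the measure trivially. Since the interpretation is strictly monotone in each argument, it also decreases under the monotone closure $\hookto$, giving strong normalization.

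Next I would prove local confluence modulo associativity by examining the critical overlaps, where — because we rewrite modulo associativity — I must also include the ``coherence'' overlaps that associativity itself creates (for example the self-overlap of rule (3) on $\langle X, Y\rangle * Z * W$, and the overlap of (1)/(2) with (3) on $L*\langle X, Y\rangle * Z$). Each such pair is checked to converge: the projection/distribution overlaps rejoin immediately, the two nestings of (3) rejoin modulo associativity, and the surjectivity interactions (e.g.\ $\langle L, R\rangle * Z$ reduced either by (7) then (4), or by (3) then (6)) meet at a common reduct. Combining local confluence modulo associativity with termination through the Newman-style lemma for rewriting modulo equations yields confluence modulo associativity. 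This immediately gives claim (1) — any two congruent expressions reduce to a common expression — and, with termination, the uniqueness of normal forms in claim (2).

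For claim (3) I would prove a postponement lemma. Write $A$ for rules (1)--(4) and $B$ for rules (5)--(7). The local step to establish is that a $B$-rewrite immediately followed by an $A$-rewrite can always be replaced by a (possibly empty) sequence of $A$-rewrites followed by a sequence of $B$-rewrites; this I would prove by analysing the relative position of the two contracted redexes (disjoint, or one nested inside the other). Disjoint redexes commute outright. For nested redexes one checks that whenever a $B$-step exposes an $A$-redex the same final term is still reachable by doing $A$-steps first: for instance $\langle L, R\rangle * Z \hookto I * Z \hookto Z$ (rule (7) then rule (4)) is reordered as $\langle L, R\rangle * Z \hookto \langle L * Z, R * Z\rangle \hookto Z$ (rule (3) then rule (6)), which keeps the $A$-step (3) before the $B$-step (6); and a collapse by (6) sitting underneath a projection is matched by applying the projection directly. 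A standard abstract induction then lifts this local swap to arbitrary reduction sequences, so every rewrite can be rearranged to perform all $A$-steps before all $B$-steps, and termination ensures the rearrangement process halts. I expect the main obstacle to be exactly this local swap: it requires a complete case analysis of how a (5)/(6)/(7) contraction can expose a new (1)--(4) redex — particularly the ways rule (7) manufactures an $I$ that feeds (4) or (5), and the ways rule (6) destroys a pairing that a projection or distribution step could have consumed — and one must verify in each case that the reordering strictly respects the $A$-before-$B$ discipline while remaining inside the terminating order.
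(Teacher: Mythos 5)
The paper does not prove this theorem at all --- it is imported wholesale from \cite{statman96} --- so there is no in-document proof to measure your argument against; I can only assess your outline on its own terms. What you propose is the standard rewriting-theory route, and the parts that can be checked concretely do check out: your multiplicative/additive interpretation is genuinely invariant under associativity, strictly monotone, and strictly decreasing on all seven rules (the distributive rule (3) drops by $|Z|-1\ge 1$ precisely because every value is at least $2$), so termination of the system modulo associativity is established. Local confluence plus Newman then gives (1) and (2), and your worked examples of the critical overlaps (e.g.\ $\langle L,R\rangle * Z$ closed either by (7),(4) or by (3),(6)) are the right ones. Two places still carry real weight and are only sketched. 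First, because rule (6) is non-left-linear and the rewriting is modulo an equational theory, ``local confluence'' must be supplemented by the variable-overlap case for (6) (rewriting inside one copy of $X$ destroys the redex and must be repaired in the other copy) and by Huet-style coherence with associativity; you name these but do not discharge them. Second, your postponement step (3) replaces one $B$-step-then-$A$-step by a block of $A$-steps followed by a block of $B$-steps, and in some of your own cases the replacement contains several $B$-steps or no net $A$-step at all; the naive induction on ``number of misplaced $B$-steps'' does not obviously terminate then, and you correctly fall back on well-founded induction along the termination order --- that is the right fix, but it should be stated as the actual induction, not as an afterthought. With those two items filled in, the proof is complete and, as far as one can tell without the cited source in hand, of the same general character as the original.
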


\begin{corollary*}
The word problem for CM is solvable.
\end{corollary*}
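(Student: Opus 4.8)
The plan is to reduce the word problem to a comparison of normal forms, exploiting the two facts already packaged in the Theorem: that equivalent expressions share a common rewrite (part~1), and that rewriting is strongly normalizing with a unique normal form (part~2). Concretely, I claim that two expressions $F$ and $G$ are equivalent in CM if and only if they have the same $\hooktto$-normal form, and that this normal form can be computed effectively from the input.

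First I would make the decision procedure precise. Given $F$ and $G$, repeatedly apply the rewrite rules (1)--(7): at each stage, search the expression tree for a subterm matching the left-hand side of some rule and replace it by the corresponding right-hand side. By part~2 of the Theorem every such sequence halts, so after finitely many steps we reach an irreducible expression, its normal form. Since this normal form is unique, it is irrelevant which redex we select at each step, and the procedure therefore defines a total computable function $F \mapsto \mathrm{nf}(F)$. Having normalized both inputs, I would then test whether $\mathrm{nf}(F)$ and $\mathrm{nf}(G)$ are the same expression modulo associativity; this test is itself decidable, for instance by flattening each product into a list of its factors and comparing the resulting terms.

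For correctness, the fact that each rule is an identity of the congruence gives soundness, so $F \equiv \mathrm{nf}(F)$ and $G \equiv \mathrm{nf}(G)$; hence equality of normal forms immediately yields $F \equiv G$. Conversely, if $F \equiv G$ then by part~1 there is a common rewrite $H$ with $F \hooktto H$ and $G \hooktto H$, and normalizing $H$ together with the uniqueness of normal forms forces $\mathrm{nf}(F) = \mathrm{nf}(H) = \mathrm{nf}(G)$. Thus $F \equiv G$ if and only if $\mathrm{nf}(F)$ and $\mathrm{nf}(G)$ coincide, and the procedure above decides exactly this. Note that only parts~1 and~2 are needed here; part~3 refines the shape of the normal form but is not required for decidability.

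The genuine mathematical content lives entirely in the Theorem; for the Corollary the only remaining task is to confirm effectiveness. The main point to check is that redex-finding is decidable — but there are finitely many rule schemata and pattern-matching them against a finite expression tree is a routine bounded search — and that the final comparison correctly accounts for the fact that all rewriting is carried out modulo associativity. Neither is a serious obstacle, so given the Theorem the Corollary follows essentially immediately.
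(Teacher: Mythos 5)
Your argument is correct and is exactly the route the paper intends: the Corollary is stated without proof because it follows from Theorem~1 by normalizing both expressions (termination and uniqueness from part~2) and comparing normal forms modulo associativity, with part~1 guaranteeing that equivalent expressions normalize to the same thing. Your added care about effectiveness of redex-finding and comparison modulo associativity is fine but not a departure from the paper's approach.
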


The normal forms of Cartesian monoid expressions have a pleasing shape. They can be described as binary trees built up from $\langle - \rangle$ with strings of $L$'s and $R$'s at the leaves. These strings are built up by $*$ with $I$ as the empty string. We call these strings the \textit{shifts} of the normal form. Normal forms that use only the rewrites (1)--(5) have a similar shape.

Now in \cite{statman96} we proved the following.
\begin{theorem}
If $F$ and $G$ have distinct normal forms then there exist
$H$ and $K$ s.t. $H*F*K = L$ and $H*G*K = R$ in CM.
\end{theorem}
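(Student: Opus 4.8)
The plan is to separate the statement into two steps: first reduce an arbitrary pair of distinct normal forms to a pair of distinct \emph{shifts} by a single left multiplication, and then separate two distinct shifts to $L$ and $R$ using a left multiplier $H$ and a right multiplier $K$. Throughout I would use two basic computations with the rewrite rules. First, left multiplication by a shift $\rho$ \emph{extracts a subtree}: by rules (1)--(3), $\rho*F$ is the subtree of the tree underlying $F$ addressed by $\rho$ read right-to-left, and if $\rho$ runs past a leaf bearing shift $\mu$ it simply lengthens that shift. Second, right multiplication \emph{substitutes into the leaves}: by rule (3), $F*K$ is obtained from $F$ by replacing each leaf shift $\mu$ with $\mu*K$. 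Finally, since $\langle L*X,R*X\rangle\hookto X$ is rule (6), a normal form never contains a node of the form $\langle L*w,R*w\rangle$; this is the fact that does the real work below.

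For the separation step, suppose $\sigma\neq\tau$ are shifts. For a shift $H$ the element $H*\sigma*K$ reads off the subtree of $K$ sitting at the address $a_\sigma$ obtained by concatenating the reverses of $\sigma$ and $H$, and likewise $H*\tau*K$ reads $K$ at $a_\tau$. As $\sigma\neq\tau$ we have $a_\sigma\neq a_\tau$ for every choice of $H$. I would choose $H$ to be either $I$ or a single letter so that, in addition, $a_\sigma$ and $a_\tau$ are \emph{prefix-incomparable}: if the reversed shifts are already incomparable take $H=I$; otherwise one reversed shift is a proper prefix of the other, and a single letter opposite to the first letter of the overhang forces the two addresses to diverge. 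With $a_\sigma,a_\tau$ incomparable I can build a finite tree $K$ having $a_\sigma$ and $a_\tau$ among its leaves, placing the shift $L$ at $a_\sigma$ and the shift $R$ at $a_\tau$ (with arbitrary filler at the remaining leaves). Then $H*\sigma*K=L$ and $H*\tau*K=R$, since the two reads return precisely the leaf values $L$ and $R$.

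For the reduction step, suppose $F\neq G$ are distinct normal forms; I claim there is a shift $\rho$ with $\rho*F$ and $\rho*G$ distinct shifts. If $F$ and $G$ have the same tree shape, they must carry different shifts at some leaf, and the shift addressing that leaf does the job. If the shapes differ, pick an address $\beta$ at which, say, $F$ has a leaf (shift $\mu$) while $G$ has an internal node; choose a deepest internal node of the subtree of $G$ below $\beta$, whose two children are sibling leaves at addresses $\beta\gamma L$ and $\beta\gamma R$. Probing with the two shifts addressing these siblings, the $F$-side returns the lengthened shifts $L*w$ and $R*w$, where $w$ is the reverse of $\gamma$ followed by $\mu$, while the $G$-side returns the two sibling leaf shifts. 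Were \emph{both} probes to fail to separate, those sibling shifts would be exactly $L*w$ and $R*w$, so $G$ would contain the node $\langle L*w,R*w\rangle$, contradicting that $G$ is a normal form. Hence at least one probe yields distinct shifts, each genuine because its address extends a leaf of $F$ and is a leaf of $G$.

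Finally I would compose the two steps: the reduction step gives $\rho$ with $\rho*F,\rho*G$ distinct shifts, and the separation step gives $H_1,K_1$ sending them to $L,R$; then $H=H_1*\rho$ and $K=K_1$ satisfy $H*F*K=L$ and $H*G*K=R$. The address bookkeeping in the separation step is mechanical; the genuine obstacle is the shape-difference case of the reduction, where the absence of a collapsible pair $\langle L*w,R*w\rangle$ in a normal form---rule (6)---is exactly what guarantees that a single left multiplier already tells $F$ and $G$ apart.
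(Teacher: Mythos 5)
The paper itself gives no proof of this theorem---it only cites \cite{statman96}---so there is nothing in the text to compare your argument against line by line; judged on its own, your proof is correct and complete. The two-stage decomposition is sound: the separation step (choose $H$ to be $I$ or a single letter so that the two probe addresses become prefix-incomparable, then build a tree $K$ with leaves $L$ and $R$ at those addresses) is mechanical and works, and you have correctly identified where the real content lies, namely the shape-difference case of the reduction step, where probing the two sibling leaves of a deepest internal node of $G$ below a leaf address of $F$ must distinguish the terms, since otherwise $G$ would contain a node $\langle L*w, R*w\rangle$ and fail to be normal. One half-sentence of care is needed there: when $w=I$ the offending node is $\langle L, R\rangle$, which is a redex for rule (7) rather than rule (6), but it still contradicts normality, so the conclusion is unaffected. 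The remaining bookkeeping (the right-to-left address convention for left multiplication by shifts, leaf substitution under right multiplication, the existence of a common address at which one tree has a leaf and the other an internal node whenever the shapes differ, and the composition $H=H_1*\rho$, $K=K_1$) all checks out.
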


\noindent
This is the ``simplicity'' rediscovered in Birget~\cite{birget09}.

\section{CQ IS ALMOST SIMPLE}

Let CQ be the free Categorical Quasiproduct Monoid. Clearly
there exists a homomorphism from CQ onto the free Cartesian Monoid.
First, we add to Theorem~2 for the case of CQ.
\begin{lemma}
If $F_ {0} = F_ {1}$ in CM but the (1)--(5) normal forms of
$F_ {0}$ and $F_ {1}$ are distinct, say $U_ {0}, U_ {1}$
resp., then there exist H, K such that for $i= 0$ or $i=1$ we have:
\begin{align*}
H*U_{i}*K &\hooktto_{5,6,7} I
\\
H*U_{1-i} *K &\hooktto_{5,6,7} \langle L, R \rangle
\end{align*}
\end{lemma}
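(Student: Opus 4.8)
The plan is to pin down the one place where $U_0$ and $U_1$ genuinely disagree and then manufacture $H,K$ that drag that disagreement up to the root, where it becomes the minimal discrepancy $I$ versus $\langle L,R\rangle$. First I would descend a path of disagreement: starting at the root, as long as both current subtrees are $\langle-\rangle$-nodes they must differ in a left or right child (else the subtrees would coincide), so I descend into a differing child. The path stays inside $\langle-\rangle$-nodes in \emph{both} trees, and at every position $q$ reached we have $U_0|_q=U_1|_q$ in CM (navigating is a CM operation applied to the equal elements $U_0=U_1$) while the two subtrees are syntactically distinct. The descent halts at a position $p$ where the two subtrees are not both pairs. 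They cannot both be shifts, since distinct shifts are distinct normal forms and hence unequal in CM; so exactly one of $U_0|_p,U_1|_p$ is a shift $s$ and the other is a pair $\langle A,B\rangle$. Projecting the CM-identity $\langle A,B\rangle=s$ by $L$ and $R$ (rewrites (1),(2)) yields $A=L*s$ and $B=R*s$ in CM.

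Next I would set $H$ to be the shift that navigates from the root down to $p$, and take $K=K_{s}$ to be a test tree carrying $I$ at the leaf addressed by $s$, built so that $s*K_{s}\hooktto I$ using only (1),(2); such a $K_{s}$ exists for every shift $s$. Left-multiplication by $H$ selects the subtree at $p$ and discards the off-path material via (1),(2), so $H*U_0*K$ reduces to $s*K_{s}$ and $H*U_1*K$ reduces to $\langle A,B\rangle * K_{s}$. On the shift side, $s*K_{s}$ collapses to $I$. On the pair side, (3) distributes $K_{s}$ over the pair; since $A=L*s$ and $B=R*s$ in CM the two components become expressions equal in CM to $L*(s*K_{s})=L*I$ and $R*(s*K_{s})=R*I$, i.e. to $L$ and $R$, so the pair assembles to $\langle L,R\rangle$. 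Writing $i$ for the index whose subtree at $p$ is the shift, this is exactly $H*U_i*K\to I$ and $H*U_{1-i}*K\to\langle L,R\rangle$.

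To read this off as a $\hooktto_{5,6,7}$ statement I would invoke Theorem~1(3): in reducing each product the rewrites (5),(6),(7) may be deferred to the end, so after performing the navigation and distribution steps (1)--(4) the residual reduction uses only (5),(6),(7). For $H*U_i*K$ this residual reduction is empty and we have already landed on $I$; for $H*U_{1-i}*K$ it is the reduction of $\langle\,\cdot\,,\cdot\,\rangle$ down to $\langle L,R\rangle$, at which we halt — continuing with one application of (7) would return $I$, consistent with $U_0=U_1$ in CM. Confluence and uniqueness of normal forms, Theorem~1(1)--(2), make these residual reductions well defined, so the $\hooktto_{5,6,7}$ targets are unambiguous.

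The step I expect to be delicate is the collapse of the pair side in the second paragraph. The children $A,B$ are guaranteed to equal $L*s$ and $R*s$ only \emph{in CM}, not syntactically; each may itself be a proper (6),(7)-expansion, so after distribution I do not literally obtain $L*I$ and $R*I$ but two expressions whose full normal forms are $L$ and $R$. Bringing them down to the bare shifts $L$ and $R$ genuinely consumes deferred rewrites — for instance a component $\langle L*L*s,\,R*L*s\rangle*K_{s}$ only reaches $L$ through an application of (6). The main obstacle is therefore to verify, using Theorem~1(1)--(3), that the deferred phase reduces each component \emph{exactly} to the single-shift normal form $L$, resp. $R$, so that the assembled pair is precisely $\langle L,R\rangle$ and not some larger expansion; once this landing is controlled, the remaining bookkeeping — that $H$ discards the off-path subtrees cleanly and that $s*K_{s}\hooktto I$ — is routine rewriting.
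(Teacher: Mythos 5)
Your descent to the first point of structural disagreement is exactly the paper's induction on the total length of $U_0,U_1$ (project by $L$ or $R$ wherever both sides are pairs, accumulating these projections into $H$), and the shift-versus-pair configuration you reach is the paper's base case, which the paper resolves by analyzing the shape of the $\hooktto_{5,6,7}$ reduction of the $(1)$--$(5)$-normal pair $\langle U,V\rangle$ down to the shift: its last root step must be rule (6) cancelling a shift $X$ (or rule (7)), which forces $U\hooktto_{5,6,7}L*X$ and $V\hooktto_{5,6,7}R*X$ with $X$ a shift, giving the target $\langle L,R\rangle$ directly. The one step you flag as delicate does close along your route as well: since $A*K_s=L$ in CM and $L$ is already a normal form, Theorem~1(1)--(2) force the full normal form of $A*K_s$ to be exactly $L$ (likewise $R$ for $B*K_s$), and Theorem~1(3) factors that reduction through the $(1)$--$(5)$ normal form so that the residual $(5),(6),(7)$ phase lands each component precisely on $L$, resp.\ $R$, assembling the pair to $\langle L,R\rangle$ as required.
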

\begin{proof}
By induction on the sum of the lengths of $U_ {0}$ and $U_ {1}$.

\emph{Basis.} since the $U_{i}$ are
both in (1)--(5) normal form, they cannot both be shifts. Thus the shortest case has the form
$U_{i} = I$ and $U_{1-i} = \langle U, V \rangle$. Now $\langle U, V \rangle \hooktto_{5,6,7} I$. Thus $U \hooktto_{5,6,7} L$ and
$V \hooktto _{5,6,7} R$, and we are done, or $U \hooktto_{5,6,7} LX$ and $V \hooktto_{5,6,7} RX$ where $X$ must be a shift since $U_{1-i}$ is normal w.r.t. (1)--(5). But then $X = I$ and we are done again.

\emph{Induction Step.} In case that one of the $U_{i}$ is a shift is as in the basis case. Thus we can assume both $U_{i}$ begin with $\langle - \rangle$. Let $U_ {0}
= \langle V_ {0}, V_ {1} \rangle$ and $U_ {1} =  \langle W_ {0}, W_ {1} \rangle$. Then there exists a $j$ s.t. $V_{j} \not = W_{j}$. In case $j = 0$ we have $L*U_ {0} \hooktto (1) \hooktto V_ {0}$ and
$L*U_ {1} \hooktto (1) \hooktto W_ {0}$ and the induction hypothesis can be applied to $V_ {0}$ and $W_ {0}$. The case $j = 1$ is similar.
\end{proof}

\begin{proposition}
The homomorphism from CQ onto the free Cartesian Monoid is unique and the only
non-trivial homomorphism
of CQ.
\end{proposition}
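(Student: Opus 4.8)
The plan is to prove both halves by analysing the congruences on CQ. Uniqueness of the map onto CM is the easy half: since CQ is the free (initial) CQP monoid and CM is itself a CQP monoid, the constants $L,R,I$ and the operations $*,\langle-\rangle$ are all forced, so there is exactly one CQP homomorphism $CQ\to CM$, and it is onto because $L,R$ generate CM. For the substantive half I would first record that CM is \emph{simple}: by Theorem~2 any two distinct CM normal forms $F,G$ admit $H,K$ with $HFK=L$ and $HGK=R$, so a non-identity congruence on CM forces $L=R$, and then $F=L*\langle F,G\rangle=R*\langle F,G\rangle=G$ collapses everything. Hence CM has only the identity and the one-point congruence.

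The reduction I would then make is: \emph{every non-identity congruence $\sim$ on CQ contains the surjectivity pair $(\langle L,R\rangle, I)$.} This suffices, because $\ker(CQ\to CM)$ is generated as a congruence by that single pair — rule (6) is a consequence of rule (7), since $\langle L*X,R*X\rangle = \langle L,R\rangle * X$ by rule (3) and then $=I*X=X$ by (7) and (4) — so the crux gives $\sim\supseteq\ker(CQ\to CM)$, whence $\sim$ descends to a congruence on CM; by simplicity that descended congruence is trivial or total, so $\sim$ is either $\ker(CQ\to CM)$ (quotient CM) or the total congruence (one-point quotient). Thus CM is the unique proper non-trivial quotient, and the stated homomorphism is the only non-trivial one.

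It remains to establish the crux for a non-identity $\sim$, splitting on whether a witnessing pair $F_0\sim F_1$ (distinct in CQ) is \emph{within-fibre} (equal CM normal forms) or \emph{cross-fibre}. Within-fibre, I would feed $F_0,F_1$ to the Lemma, obtaining $H,K$ with $HF_iK \hooktto_{5,6,7} I$ and $HF_{1-i}K \hooktto_{5,6,7} \langle L,R\rangle$ and $HF_iK \sim HF_{1-i}K$; I would then extract $(\langle L,R\rangle, I)$ by descending with the CQ-valid rules: left-multiplying a related pair by $L$ (or $R$) strips one outer $\langle-\rangle$ by rule (1) (resp.\ (2)) and lowers the height while keeping the pair within-fibre, until one reaches a relation of the shape $\langle Ls,Rs\rangle \sim s$, at which point right-multiplying by a generic $\langle A,B\rangle$ (using rules (1)--(3)) and specialising $A:=I$ produces exactly $\langle L,R\rangle\sim I$. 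Cross-fibre, the CM normal forms are distinct, so Theorem~2 supplies a context making the two images $L$ and $R$; the same projection-and-specialisation manipulation (now using $L*\langle I,I\rangle = R*\langle I,I\rangle = I$) drives $L\sim I\sim R$, i.e.\ the total congruence, which a fortiori contains the surjectivity pair.

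The hard part will be precisely this last step, for two linked reasons. First, the surjectivity rewrites (6),(7) handed to us by the Lemma are \emph{not} valid in the quotient $CQ/\!\sim$, so one cannot simply push $HF_iK\sim HF_{1-i}K$ forward to read off $I\sim\langle L,R\rangle$; the content is to re-derive surjectivity from $\sim$ using only congruence closure and the genuinely CQ-valid rules (1)--(5), which is what the height-reducing peeling accomplishes. Second, the cross-fibre collapse — a single cross-fibre identification detonating the entire monoid — must be checked uniformly over the possible shapes of the two related normal forms rather than from any finite list, and keeping that induction honest is where I expect the real bookkeeping to lie.
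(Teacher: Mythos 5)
Your overall architecture is the same as the paper's: uniqueness from freeness, simplicity of CM via Theorem~2, the observation that $\ker(\mathrm{CQ}\to\mathrm{CM})$ is the congruence generated by the single pair $(I,\langle L,R\rangle)$ (the paper uses this implicitly), Lemma~1 for related pairs that agree in CM, and Theorem~2 for pairs that do not. Your within-fibre ``peeling'' --- left-multiplying by $L$ or $R$ to descend to the first structural disagreement $\langle L*s,R*s\rangle\sim s$ and then right-multiplying by a nested pair that specialises $s$ to $I$ --- is in fact a reasonable way to make precise the paper's bare assertion that Lemma~1 forces the homomorphism to identify $I$ with $\langle L,R\rangle$: the Lemma's conclusion is only a $\hooktto_{5,6,7}$ rewrite, hence an equality in CM rather than in CQ, so some such extraction using only the CQ-valid rules is genuinely needed.

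The step that fails as written is the cross-fibre case. Theorem~2 gives $H*F*K=L$ and $H*G*K=R$ \emph{in CM only}; in CQ the congruence merely relates two expressions whose CM-values are $L$ and $R$, and no amount of projection and specialisation turns that into $L\sim I\sim R$ --- you cannot even conclude $L\sim R$ at this stage. The paper's actual move is different: using Theorem~1(3) it normalises $H*F*K$ and $H*G*K$ by rules (1)--(4) to $U$ and $V$, argues structurally that $U=X*L$ and $V=Y*R$ with $X$ and $Y$ both equal to $I$ in CM, and then right-multiplies by $\langle I,I\rangle$ to obtain $X\sim Y$ --- that is, it \emph{reduces the cross-fibre case to a within-fibre pair over $I$}. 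Only after Lemma~1 then yields $I\sim\langle L,R\rangle$, so that the congruence descends to CM, does simplicity deliver the total collapse you want. Your parenthetical use of $\langle I,I\rangle$ shows you are circling the right move, but its output is a within-fibre pair, not $L\sim I$. (Both your sketch and the paper are silent on the degenerate sub-case where $X$ and $Y$ coincide in CQ, where one must argue separately from $X*L\sim X*R$.)
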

\begin{proof}
If $F$ and $G$ are not equivalent in CM then there exist $H$ and $K$ as in Theorem~2. By Theorem~1 there exists a common rewrite which can be obtained by rewriting with all of the (5)--(7) rewrites at the end. Thus there exist U, V s.t.:
\begin{align*}
&H*F*H \hooktto_{1,2,3,4} U \hooktto_{5,6,7} L
\\
&H*G*H \hooktto_{1,2,3,4} V \hooktto_{5,6,7} R
\end{align*}
Now if $X*I$ occur in either $U$ or $V$ it can be eliminated by
$X*I \hookto X$, since $X$ cannot contain $\langle - \rangle$. Now, neither $U$ nor $V$ can contain $\langle X, I \rangle$, $\langle I, X \rangle$ or  $\langle L, R \rangle$ since this would prevent rewriting
to $L$ or $R$. This holds for any rewrite of $U$ or $V$ by (6). Thus only
the rewrite (6) is used. Hence there exist $X, Y$ such that $U = X*L$ and $V = Y*R$ (reverse rewrite (3) multiple times). But then $I = L* \langle I, I \rangle = U * \langle I, I \rangle = X$ and $I = R* \langle I, I \rangle = V* \langle I, I \rangle = Y$. So if $F$ and $G$ are identified by a homomorphism
of CQ then so are $X$ and $Y$. But then by Lemma~1 the homomorphism identifies $I$ and $\langle L, R \rangle$. Thus the homomorphism is a homomorphism of CM. This is impossible by \cite[Section 3]{statman96}.
\end{proof}

\section{G.G.H. PDA'S}

Now we would like to introduce a model for computing with CQ expressions which will allow us to prove many questions about CQ multiplication decidable. We cannot expect too much since it is obvious that the existential theory of CQ is undecidable~\cite[Section~9]{statman96}. For what follows
we shall refer to (1)--(5) normal form as \emph{CQ normal form}.

If $F$ is in CQ normal form, then the shift $S$ which when read from left to right describes the
position of the leaf of the tree of $F$ where the shift $S'$ resides, then the CQ normal form of $S*F$ is
$S'$, but for no initial segment $S''$ of $S$ is the normal form of $S''*F$ a shift. Our model is the G.G.H. notion~\cite{ginsburg67} of a one-way single stack pushdown automata. These are non-deterministic PDAs which can scan the current stack by a two way NDFA before reading the top most stack symbol,
executing a stack operation, changing state and reading the next input. The input alphabet
consists of CQ expressions $F$ in normal form taken from a given finite set, and input from left to
right.

When the initial contents of the stack are the string $S$, with top-to-bottom corresponding to right-to-left, and the input is $F_ {1}, F_{t}$, then we want the contents of the stack to be the normal form of $F_ {1}
* \cdots *F_{t}$. However, this may not be a shift. In this case we terminate the computation in failure.
The PDA operates as follows:
\begin{enumerate}[label=(\arabic*)]
\item It reads the input $F$.
\item It reads the stack to determine if it has the form $S' * S$ where the normal form of $S*F$ is a shift
$S''$. If not the computation fails.
\item If (2) succeeds it pops $S$ from the stack and pushes $S''$ onto the stack.
\end{enumerate}

Here we note that in (2) there are only finitely many $S$ to check so this can be implemented in a
deterministic G.G.H. Now G.G.H. proved that the sets accepted by G.G.H. PDAs are closed
under intersection and union but not complement. The deterministic one are closed under
complement but not intersection.

\begin{theorem}
Let $B$ be a finite set of CQ expressions. Then it is
decidable if $F$ is a product of members of~$B$.
\end{theorem}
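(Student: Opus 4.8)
The plan is to reduce the statement to an emptiness test for a GGH PDA language. Since the word problem is solvable (Corollary to Theorem~1), I first replace $F$ by its CQ normal form and read off its tree: let $p_{1}, \dots, p_{m} \in \{L,R\}^{*}$ be the leaf paths, with leaf shifts $T_{1}, \dots, T_{m}$, and let $q_{1}, \dots, q_{n}$ be the internal node paths (the proper prefixes of leaf paths). All of these are fixed shifts of length at most the depth of $F$. Writing a candidate product as $b_{1} * \cdots * b_{k}$ with each $b_{\ell} \in B$, the key reduction I would establish is the following normal-form characterization: $F = b_{1} * \cdots * b_{k}$ in CQ if and only if (i) for every leaf path $p_{j}$ the normal form of $p_{j} * b_{1} * \cdots * b_{k}$ equals $T_{j}$, and (ii) for every internal node path $q_{i}$ the normal form of $q_{i} * b_{1} * \cdots * b_{k}$ is \emph{not} a shift.

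I would prove the characterization by reading off the tree of the normal form $G$ of $b_{1} * \cdots * b_{k}$. Multiplying a shift $S$ into a normal-form tree selects the subtree at the position coded by $S$ (the dynamics recorded just before this theorem), so the normal form of $S * G$ is a shift exactly when $S$ reaches a leaf of $G$, and it is a pairing exactly when $S$ halts at an internal node. Condition (ii) therefore forces $G$ to branch at every internal node of $F$; since no tree has a leaf strictly below a leaf, this already forces every $p_{j}$ to reach a leaf of $G$, and condition (i) then pins that leaf's value to $T_{j}$. Hence $G$ and $F$ have the same leaf paths and the same leaf shifts, i.e. $G = F$; the converse is immediate. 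The point of including (ii) is that (i) alone cannot detect a $G$ with \emph{fewer} branches than $F$: for $F = \langle L, R\rangle$ both $G = I$ and $G = \langle L, R\rangle$ satisfy (i), and only (ii) separates them --- exactly the failure of surjectivity in CQ.

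Each clause is now recognized by a GGH PDA over the input alphabet $B$. For (i) I run the machine of Section~3 initialized with $p_{j}$ on the stack; by its update rule it maintains the normal form of $p_{j} * b_{1} * \cdots * b_{\ell}$ on the stack, failing precisely when this value stops being a shift, and I let it accept iff it reaches the end of the input with $T_{j}$ on the stack, a finite check carried out by the two-way stack scan. This machine is deterministic. For (ii) I run the same machine initialized with $q_{i}$; since a pairing is absorbing under right multiplication, the normal form of $q_{i} * b_{1} * \cdots * b_{k}$ fails to be a shift exactly when the machine fails, so clause (ii) is the \emph{complement} of the accepted language, which is again GGH-recognizable because the deterministic machines are closed under complement. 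Intersecting these $m + n$ languages (GGH PDA languages being closed under intersection) yields a single GGH PDA whose language is precisely the set of words $b_{1} \cdots b_{k} \in B^{*}$ with $F = b_{1} * \cdots * b_{k}$. Thus $F$ is a product of members of $B$ iff this language is nonempty.

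The main obstacle is the last step: deciding emptiness. Here I would rely on the classical fact that the emptiness problem for one-way GGH stack automata is decidable; together with the effective closures under intersection and deterministic complement this makes the whole test effective. A secondary point needing care is the direction $(\Rightarrow)$ of the characterization, specifically the necessity of clause (ii): getting the tree-shape comparison right --- that $G$ must branch wherever $F$ does --- is exactly where the absence of the surjectivity axiom $\langle L, R\rangle = I$ in CQ is used, and it is the only subtle inclusion in the argument.
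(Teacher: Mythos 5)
Your proposal is correct and follows essentially the same route as the paper: the same characterization of the product via its behaviour at the leaf paths (machines $M_i$, your clause (i)) together with a ``not a shift'' condition at internal positions handled by deterministic complementation (machines $N_i$, your clause (ii)), followed by closure under intersection and an emptiness test; your only deviations are cosmetic --- you test all internal nodes where the paper tests only each leaf's parent (equivalent, since a shift at a node forces shifts at all its descendants), and for the final step the paper is careful to note that G.G.H.'s own decision theorems do not cover the nondeterministic intersection machine and instead invokes La Torre's encoding into Rabin's WS2S to decide emptiness.
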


\begin{proof}
We assume that all expression is in CQ normal form. First,
let $S_ {1}, \ldots, S_{n}$ be the complete list of shifts s.t. $S_{i}*F$
equals a shift $S_{i}$, but no initial segment of $S_{i}$ (right to left) does.

We construct deterministic G.G.H. machines $M_{i}, N_{i}$, for $i=1, \ldots, n$,
so that $M_{i}$ accepts all inputs $F_ {1}, \ldots, F_{k}$ s.t. $S_{i}*F_ {1} * \ldots *F_{k} = S_{i}$. For each
$i = 1, \ldots , n$ let $S''_{i}$ be $S_{i}$ with its last $L$ or $R$ (from right to left) deleted. Then $N_{i}$ is
constructed to accept all inputs $F_ {1}, \ldots, F_{k}$ s.t. the CQ normal form of $S''_{i}*F_ {1} *...*F_{k}$ is not a shift. Here we use closure of deterministic machines under complements.
Now by the work of G.G.H. \cite{ginsburg67} there exists a non-deterministic G.G.H. machine $M$ which accepts the intersection of the sets of inputs accepted by the all machines $M_{i}$ and $N_{i}$ for $i=1,\ldots, n$. Now the decision
theorems of G.G.H. do not apply to non-deterministic G.G.H. PDAs. However, the method of La Torre~\cite{latorre07} does apply. The machine M can be represented in Rabin's theory WS2S~\cite{rabin69} and
tested for emptiness.
\end{proof}

Many CQ decision problems can be solved with this method. For example, it is decidable
whether the set of CQ distinct products of members of $B$ is infinite. This will be seen in the next
section.

\section{MORE APPLICATIONS OF G.G.H. PDA'S}
There is a well-known duality between rooted binary trees with $n$ leaves and triangulations of an
$n$-gon. In the CQ case the leaves come equipped with shifts. These shifts control the results of
further composition. The nice geometrical questions about tiling the plane with $n$-gons are
analogous to questions about covering Cantor Space by normal CQ expressions.

\begin{definition}
Let $B$ be a finite set of CQ normal forms. We say that $B$ \textit{covers Cantor space} if there
exists an infinite sequence
$F_1, \ldots, F_ {n}, \ldots$ of members of $B$ s.t. for each shift
$S$ there exists an $n$ s.t. the normal form of
$S*F_ {1} * \ldots * F_{n}$
is not a shift.
\end{definition}

\noindent
As an example, the set $\{ \langle I, I \rangle \}$ covers Cantor space but $\{ \langle R, L \rangle \}$ does not.

\begin{definition}
A shift $S$ is said to be \textit{bad} for B if for any sequence
$F_ {1}, \ldots, F_{n}$ of members of $B$ the CQ normal form of
$S*F_ {1} *...*F_{n}$
is a shift.
\end{definition}

\noindent
For example, $R*R$ is bad for $\{ \langle L, R*R \rangle \}$. First we observe that the set of bad B shifts is recursive uniformly in B. This can be seen by constructing a deterministic G.G.H. one-way stack machine which, if started with $S$ in its stack, accepts all inputs if and only if $S$ is bad. The
construction of the machine is uniform in $S$, so that the set of $S$ which are bad is definable in Rabin's theory WS2S by LaTorre's method. To recall, the input alphabet is the set $B$ and the
machine runs as follows. With input symbol $F$ the machine reads the top of the current stack
looking for a minimal $S$ s.t. the normal form of $S*F$ is a shift. This can be done by a DFSA.
Having found such an $S$, the machine pops $S$ and pushes the normal form of $S*F$. Otherwise, the
machine rejects the entire input.

Before considering coverings, we settle an algebraic question.
\begin{theorem}
It is decidable whether the submonoid generated by $B$ is infinite.
\end{theorem}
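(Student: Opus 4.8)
The plan is to show that the submonoid is infinite exactly when the products $F_{1}*\cdots*F_{k}$ of members of $B$ grow without bound in one of two independent ways, and to detect each kind of growth with a G.G.H.\ machine analysed in WS2S. First I would fix a size measure: for a CQ normal form $P$ let $\mathrm{dp}(P)$ be the depth of its underlying (full binary) tree and $\mathrm{lab}(P)$ the maximum length of a shift occurring at a leaf. There are only finitely many full binary trees of depth at most $D$, and each has at most $2^{D}$ leaves, so only finitely many normal forms satisfy $\mathrm{dp}(P)\le D$ and $\mathrm{lab}(P)\le L$. Hence the submonoid is infinite if and only if either $\sup_{P}\mathrm{dp}(P)=\infty$ (unbounded depth) or $\sup_{P}\mathrm{lab}(P)=\infty$ (unbounded leaf shifts), the supremum ranging over all products $P$. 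It therefore suffices to decide each condition separately.

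For unbounded depth I would reuse the fact, established just above, that the bad shifts for $B$ form a WS2S-definable set. The claim is that depth is unbounded iff infinitely many shifts are \emph{not} bad. If depth is bounded by $D$, then since CQ normal forms are full binary trees a shift longer than $D$ navigates past every product into a leaf, so $S*P$ is a shift for every product $P$ and every such $S$; thus only shifts of length $\le D$ can be non-bad, and these are finitely many. Conversely, if depth is unbounded then for each $D$ some product has a branch longer than $D$, and the length-$D$ initial segment $S$ of that branch lands at an internal node, so $S*P$ is not a shift and $S$ is a non-bad shift of length $D$. Since the alphabet $\{L,R\}$ is finite, having infinitely many non-bad shifts is the same as having them of every length, i.e.\ unbounded depth. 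As the bad shifts are WS2S-definable so is their complement, and WS2S decides whether a definable set of shifts is infinite.

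For unbounded leaf shifts I would build a G.G.H.\ machine that \emph{follows a single leaf} of the evolving product. Reading $F_{1},\ldots,F_{k}$ from $B$, it keeps on its stack the shift $w$ labelling one leaf, chosen by guessing a branch: it initialises with a leaf shift of $F_{1}$ and at each step forms the CQ normal form of $w*F_{i+1}$. When that form is a shift the leaf persists and the machine pops the navigated suffix and pushes the reached leaf shift, exactly as in the basic PDA; when it is not a shift the leaf expands and the machine nondeterministically continues with one of the finitely many leaf shifts of the relevant subtree of $F_{i+1}$. A key point is that an expansion occurs only when $w$ is shorter than $\mathrm{dp}(F_{i+1})\le d:=\max_{F\in B}\mathrm{dp}(F)$, so the stack then holds a short word and every stack operation is bounded; the machine is a legitimate G.G.H.\ device. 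The set of shifts realisable as a final stack is exactly the set of leaf shifts occurring in products of $B$, and $\mathrm{lab}$ is unbounded iff this set is infinite. By La~Torre's method the machine, hence this set, is representable in WS2S, so its infiniteness is decidable, and the submonoid is infinite iff one of the two tests succeeds.

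The main obstacle is that the decidability results imported from G.G.H.\ and La~Torre concern emptiness and membership, whereas here I need to decide \emph{un}boundedness; the real work is to express, as a WS2S predicate, that a definable set of shifts is infinite and that the follower machine admits a reachable loop with net positive stack change. These are pumping-style loop detections once the transition relation is WS2S-encoded, but they must be set up with care, because only genuine leaf labels may be counted: one cannot simply ask whether some $S*P$ is a long shift, as that is unbounded for trivial reasons even when the submonoid is finite, e.g.\ $L^{n}*\langle R,L\rangle$ has length $n$ while $\{\langle R,L\rangle\}$ generates only $\{I,\langle R,L\rangle,\langle L,R\rangle\}$. Tracking an actual leaf, with a reset at each expansion, is precisely what excludes this degeneracy, since the follower's stack only ever holds a true leaf label. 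The remaining bookkeeping --- the two-case reduction and the verification that the follower realises exactly the leaf shifts of products --- is routine given the normal-form analysis of Theorem~1.
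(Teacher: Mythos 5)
Your decomposition --- the submonoid is infinite iff the products have unbounded tree depth or unbounded leaf-shift length --- is exactly the dichotomy underlying the paper's Case 1 / Case 2 split, and both halves of your argument are sound, but your implementations of the two halves genuinely differ from the paper's. For tree growth, the paper builds, for each initial $F\in B$ and each leaf shift $S$ of $F$, a nondeterministic machine over the alphabet $B+\{@\}$ that counts leaf \emph{expansions} with $@$-symbols and asks whether the set of accepted $@_m$ is all of them; you instead observe that depth is unbounded iff there are non-bad shifts of every length, and piggyback on the already-established WS2S-definability of the bad set, deciding infiniteness of its complement with a single WS2S sentence. This is cleaner and eliminates the per-$(F,S)$ case analysis. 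For shift growth, the paper works in two stages: in Case 2 it invokes K\"onig's lemma to \emph{search} for a finite tree $T$ of prefixes after which every leaf shift is bad, and then tests each of the finitely many bad shifts arising in $T$ for ``extenuativeness'' with a deterministic machine started from that shift; your single nondeterministic follower machine, which tracks one leaf through the products and whose reachable final stack contents are exactly the leaf shifts of all products, collapses this into one construction and avoids K\"onig's lemma and the unbounded search for $T$ altogether. What the paper's route buys is that every decidability question it poses is already in the form G.G.H.\ and La Torre handle (emptiness/universality of a projected language); what yours buys is conceptual economy and a reusable characterization of unbounded depth.

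The one place where your proposal stops short of a proof is the step you yourself flag: deciding whether the follower machine's set of reachable final stack contents is infinite. You gesture at ``pumping-style loop detection,'' but the paper already supplies the exact device you need, and you should use it rather than reprove it: enlarge the input alphabet to $B+\{L\}$, accept $F_1\ldots F_k L^n$ iff the final stack (the followed leaf label) has length at least $n$, apply the sequential-transducer theorem of Ginsburg--Greibach--Harrison to project onto the $L^n$ suffixes, and use La Torre's method to decide whether the projection is all of $L^*$. With that substitution your argument closes completely; also note in passing the paper's preliminary reduction that if $B$ contains a shift the submonoid is trivially infinite, which you should dispose of first since your size measures are tailored to non-shift normal forms.
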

\begin{proof}
Of course, if $B$ contains a shift then the submonoid generated by $B$ is infinite, so we can
assume this is not the case. First, we say that an $S$ bad for $B$ is \textit{extenuative} if for each $n$ there
exists
$F_ {1}, \ldots, F_{k}$ in $B$ s.t. the CQ normal form of $S*F_ {1} *...*F_{k}$
has length at least $n$. Now it is decidable whether $S$ is extenuative.
First decide whether $S$ is bad for $B$. Now consider the set of natural numbers encoded as strings
of $L$s and the input language $B + \{L\}$.

Design a deterministic G.G.H. PDA with $S$ initially in its stack and which accepts an input $W$ if
and only if $W$ has the form
$F_ {1} \ldots F_{k} L \ldots L$ (with $n$ occurrences of $L$), and where after reading $F_ {1} \ldots F_{k}$ the stack has length at least $n$.

Now the set of words $L... L$ s.t. $F_ {1}, \ldots, F_{k}L... L$ is accepted by the machine is the result of applying a sequential transducer to the set of words $F_ {1}, \ldots, F_{k}L \ldots L$ accepted. Thus by \cite[Theorem~2.4]{ginsburg67} we can construct a non-deterministic G.G.H PDA which accepts exactly this set of words $L... L$. Thus by La~Torre's method we can decide if this set is all strings of $L$s.

Next pick an initial member $F$ of $B$ and one of its shifts $S$. We construct a G.G.H. nondeterministic PDA as
follows. The input alphabet consists of $B$ plus a new letter $@$ (we could have used $L$ as above).
On inputs of the form $F_ {1} ... F_{k}@_{m}$ the machine proceeds as above except when the
current stack is $@_p S'$ and the input is $F_{i}$ s.t. the CQ
normal form of $S' * F_{i}$ is not a shift. In this case, the machine guesses a minimal length shift
$S''$ s.t. the normal form $S'' * S' * F_{i}$
is a shift, say $S+$, and updates the stack to $@_{p+1} S+$. The machine accepts if in the final stack
the number of $@$ is not less than $m$.

By G.G.H. Theorem 2.4 the set of $@_{m}$ s.t. there is a $F_ {1}, \ldots, F_{k}@_m$
accepted by the machine is accepted by a non-deterministic G.G.H PDA. Now it is decidable if
this set is all $@_{m}$. We distinguish two cases.

\textit{Case 1.} The set is all $@_{m}$. Then there are arbitrarily large CQ normal form generated by $B$ and submonoid generated by $B$ is infinite.

\textit{Case 2.} The set of all $@_{m}$ is finite for all $F$ in $B$ and $S$. Then every infinite sequence $F_ {1}, \ldots, F_ {n}, \ldots$ of members of $B$ must have an initial segments $F_ {1}, \ldots, F_{k}$ s.t. all the shifts of the CQ normal form of $F_ {1} * \ldots  *F_{k}$ are bad. So by Konig's lemma there exists a finite tree of
such finite sequences s.t.
every path has an initial segment in this tree. Now search until such a finite tree $T$ is found. Now
suppose that the CQ normal form of $F_ {1} *...*F_{k}$ has a bad shift S which is extenuative.
Now $F_ {1}, \ldots, F_{k}$ has an initial segment in $T$ say $F_ {1}, \ldots, F_{p}$.
So $F_ {1} * \ldots *F_{p}$ has a bad shift with the same property. Now the submonoid generated by B
will be finite if and only if no such extenuative shift exists. Thus it suffices to test all the shifts of
products in the tree for extenuativeness.
\end{proof}

\begin{theorem}
It is decidable whether $B$ covers Cantor space.
\end{theorem}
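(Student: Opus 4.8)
The plan is to convert covering into a liveness condition on the monotonically refining sequence of trees $P_n = F_1 * \cdots * F_n$, and then to decide that condition with the G.G.H.\ and WS2S tools already developed. The structural fact that drives everything is that rewrite (3) distributes each new factor $F_{n+1}$ across the leaves of $P_n$, so the CQ normal form of $P_{n+1}=P_n*F_{n+1}$ is obtained by replacing each leaf shift $S$ of $P_n$ by the normal form of $S*F_{n+1}$: a leaf either survives (when $S*F_{n+1}$ is again a shift) or blossoms into a subtree. Thus no internal node is ever destroyed, the internal nodes of $P_n$ form an increasing family $D_n$, and the leaves form a receding frontier. Since each $P_n$ is a full binary tree, every shift of length below the minimum leaf depth of $P_n$ is internal. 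Consequently the normal form of $S*F_1*\cdots*F_n$ fails to be a shift exactly when $S\in D_n$, and $B$ covers Cantor space iff some sequence drives $\bigcup_n D_n$ to the set of all shifts, equivalently drives the minimum leaf depth of $P_n$ to infinity. Restated: $B$ covers iff some sequence leaves no \emph{surviving} shift, where $S$ survives along $w$ if $S*F_1*\cdots*F_n$ is a shift for every $n$.

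Next I would localize the obstruction and tie it to the already-recursive set of bad shifts. A leaf carrying shift $S$ can ever blossom only if some continuation splits $S$, that is, only if $S$ is not bad for $B$; a bad leaf caps the minimum depth in its cone forever. Hence the minimum depth can be pushed to infinity only by repeatedly splitting leaves while never parking a bad shift at a leaf. Calling a reachable product \emph{live} when all its leaf shifts are non-bad, and \emph{good} when from it one can keep splitting every descendant leaf forever, covering is equivalent to goodness of the trivial product with root shift $I$, a greatest-fixed-point safety-plus-progress condition relative to the recursive bad set. To decide it I would reuse the deterministic single-shift machine: started with a guessed shift $S$ in its stack, on each input $F$ it locates the minimal top shift $S'$ whose product $S'*F$ normalizes to a shift (by the minimal-shift DFSA) and reports blossoming otherwise; for fixed $S$ and input $w$ this machine detects whether $S$ is eventually split. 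Combining an existential guess of the infinite word $w$ with a universal quantifier over shifts $S$, G.G.H.\ Theorem~2.4~\cite{ginsburg67} together with La~Torre's method~\cite{latorre07} represents the whole behaviour in Rabin's WS2S~\cite{rabin69}, and the resulting sentence is tested for truth.

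The genuine difficulty, absent in the two preceding theorems, is the quantifier alternation $\exists w\,\forall S$: one must intersect infinitely many single-shift computations along a single guessed word while the evolving shift on the stack is unbounded. I expect to discharge this by compactness, exactly as in Case~2 of the preceding proof. By the frontier fact and K\"onig's lemma, covering is equivalent to the statement that for every $m$ a reachable live product all of whose leaves have depth at least $m$ is attainable, which by monotone refinement collapses to ``from every reachable live product the frontier can be strictly deepened.'' A K\"onig-tree search over reachable products, using the recursiveness (uniform in $B$) of the bad set and the decidability of extenuativeness to detect a permanently stuck leaf, then produces either a finite certificate that deepening always continues, witnessing covering, or a finite tree showing that every sequence is forced into a non-extenuative bad leaf, witnessing non-covering; in either case the alternation is resolved effectively.
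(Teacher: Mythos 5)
Your structural analysis in the first paragraph is sound, and you assemble exactly the two ingredients the paper uses: monotonicity (once the normal form of $S*F_{1}*\cdots*F_{n}$ fails to be a shift, it stays a non-shift under further right multiplication by rule (3)) and the already-established decidability of whether a bad shift exists. The gap is that you never combine them into the one observation that makes the theorem a two-line corollary: $B$ covers Cantor space \emph{if and only if} no shift is bad for $B$. The nontrivial direction is a greedy diagonal construction. Enumerate the countably many shifts; at each stage the current product sends the next shift $S$ either to a non-shift (done, and by monotonicity it stays done) or to a shift $S'$, which, not being bad, admits a finite continuation from $B$ splitting it; append that continuation and proceed. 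This dissolves the quantifier alternation $\exists w\,\forall S$ that you single out as the genuine difficulty: the $\forall S\,\exists (F_{1},\ldots,F_{n})$ statement (no bad shift) is equivalent to the $\exists w\,\forall S$ statement (covering), and the former is decidable by the bad-shift machinery from the preceding paragraph of the paper, which you already cite. No new automata construction is needed.

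Because you do not make this reduction, the decision procedure you actually propose is left open. The ``K\"onig-tree search over reachable products'' is not shown to terminate: in the positive case you invoke ``a finite certificate that deepening always continues'' quantified over the infinite set of reachable products, with no compactness argument analogous to Case 2 of Theorem 4 (there K\"onig's lemma applies because every infinite branch has a finite all-bad prefix; here no finite witness for the positive answer is exhibited). The appeal to extenuativeness is misplaced: an extenuative bad shift obstructs covering just as a non-extenuative one does, so ``forced into a non-extenuative bad leaf'' is not the negation of covering. Finally, your intermediate equivalence ``covering iff for every $m$ some reachable product has all leaves at depth at least $m$'' is true, but only because both sides are equivalent to the absence of bad shifts; it does not ``collapse by monotone refinement,'' since the witnessing products for different $m$ may lie along incompatible input sequences.
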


\begin{proof}
Given $B$ it is impossible to
cover Cantor space if there is a bad shift i.e. a shift s.t. for all $F_ {1} \ldots F_{n}: B$ the normal
form of $S*F_ {1} * \ldots *F_{n}$ is a shift. This is decidable by previous argument. Now if no bad
shift exists then for every shift $S$ there exits $F_ {1} ...F_{n}: B$ s.t. the normal form of $S*F_ {1}
* \ldots *F_{n}$ is not a shift. Thus $B$ can cover Cantor space
by repeatedly applying the $F_ {1} *...*F_{n}$ as shifts $S$ appear in the normal form of previous
applications.
\end{proof}

We conclude with an amusing observation.

\section{GIGSAW PUZZLES}
     A gigsaw puzzle is a patern matching problem where each variable
occurs at most once and solutions come from a fixed set of CQ expressions
all of which must be used. Here we show the problem is NP complete.

    We encode the satisfiability problem. We assume
that we are given a conjunctive normal form (conjunction of disjunctions;
we regard $x \vee x \vee y$ as distinct from $x \vee y$). We suppose that the variables
are $x_{1}, \ldots, x_{n}$. For each variable $x_{i}$ we construct two gadgets
$G_{i}$ and $H_{i}$ by
\begin{align*}
G_{i} &= \langle \underbrace{\langle I,I \rangle, \langle \langle I,I \rangle , \langle ...} \langle L, \langle \langle \underbrace{I,I \rangle , \langle ... \langle \langle I,I \rangle} ,I \rangle ...\rangle \rangle \rangle ... \rangle \rangle \rangle
\\*[-4pt]
&\hspace{14pt}
\text{$i-1$ occurrences}
\hspace{15pt}
\text{$n-i+1$ occurrences}
\\[5pt]
H_{i} &= \langle \underbrace{\langle I,I \rangle , \langle \langle I,I \rangle, \langle ...} \langle R, \langle \langle \underbrace{I,I \rangle , \langle ... \langle \langle I,I \rangle} , I \rangle ... \rangle \rangle \rangle ... \rangle \rangle \rangle
\\*[-4pt]
&\hspace{14pt}
\text{$i-1$ occurrences}
\hspace{15pt}
\text{$n-i+1$ occurrences}
\end{align*}

Now suppose that we have a conjunct C of the form
$$x_{a(1)} \bigvee \ldots \bigvee x_{a(k)} \bigvee \neg x_{b(1)} \bigvee ... \bigvee \neg x_{b(m)}$$
We replace each of the $k+m$ variable occurrences by a new variable $y_{i}$
and we construct a product $C\#$ of terms
\begin{align*}
&L*R^{a(i)} * y_{i} * \langle I,  \langle I, I \rangle \rangle&&\text{ if $i< k+1$}
\\
&L*R^{b(k-i)} * y_{i} * \langle \langle I, I \rangle, I\rangle&&\text{ if $k < i$}
\end{align*}
and the identity $C\# = I$. Now assume that $x_{i}$ occurs $m$ times. We
introduce $m$ new variables $z_{1}, \ldots, z_{m}$ and define the term $B$
in $m$ stages as follows
\begin{align*}
B_ {1} &= L*z_ {1}
\\
B_{j+1} &= B_{j}*z_{j}* \langle \langle L, \langle I, I \rangle \rangle, \langle \langle I, I \rangle, R \rangle \rangle
\\
B &= B_{m}
\end{align*}
and we add the identity $B * \langle I,I \rangle = I$. This is a total number of identities
equal to the number of conjuncts plus the number of variables. Then
the identities are solvable using the gadgets if and only if the original conjunction
is satisfiable.

\section{WHAT NEXT?}
It is interesting to see if these methods can be extended to the free Cartesian monoid. The
questions one wants to ask about the corresponding non-deterministic G.G.H. machines do not
seem to be answerable in any straightforward manner. However, we can make some direct
applications. Let RI be the submonoid of right invertible elements of CM. Here, we let $B$ be a
finite subset of RI and $F$ an element of RI, all in (1)--(7) normal form.

\begin{lemma}
If $F$ is in the submonoid generated by $B$ then there exists
$F_ {1}, \ldots, F_{n}$ in B s.t.
$$F_ {1} * \cdots * F_{n} \hooktto_{1,2,3,4,5} F$$
except for $F = I$.
\end{lemma}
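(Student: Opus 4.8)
The plan is to turn the statement into a normal-form question and then to lean on a clean description of right invertibility. First I would fix the reading of the rewrite system: a $(1)$--$(5)$ normal form is a finite binary tree whose leaves carry shifts, and two such trees denote the same element of CM exactly when one is obtained from the other by the collapsing rewrites $(6)$ and $(7)$, each of which replaces a node of the form $\langle L*X,R*X\rangle$ by $X$. Since $F$ is given in $(1)$--$(7)$ normal form, it is already its own $(1)$--$(5)$ normal form; so what must be produced is a $B$-product whose $(1)$--$(5)$ normal form is literally $F$, i.e. a product in whose reduction no collapsing step is ever forced. Thus the whole lemma is about avoiding $(6)$ and $(7)$.

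The structural input I would isolate is a description of $RI$: an element is right invertible iff the prefix-substitution it induces on Cantor space is injective, and this holds iff the shifts decorating the leaves of its reduced tree are pairwise prefix-incomparable, i.e. form an antichain. I would prove this from the navigation picture already used in Section~3 (the leaf at tree-position $p$ with shift $\sigma$ carries the cone above $p$ onto the cone above $\sigma$, so injectivity is exactly disjointness of the image cones). One useful consequence: a leaf with the empty shift $I$ can lie in an antichain only as the sole leaf, so for $F\neq I$ the reduced tree has no $I$-leaf. Hence a $(7)$-collapse $\langle L,R\rangle\hookto I$ can never survive into $F$ (the resulting $I$-leaf cannot be absorbed by any $(6)$-step), so for $F\neq I$ the $(1)$--$(5)$ normal form $U$ of any $B$-product equal to $F$ contains \emph{no} $\langle L,R\rangle$ node. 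This already pins the exception: the only obstruction of type $(7)$ is the root node $\langle L,R\rangle$, and that occurs precisely when $F=I$.

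With $(7)$-collapses out of the way for $F\neq I$, only $(6)$-collapses $\langle Ls,Rs\rangle$ with $s\neq I$ remain to be eliminated, and here the degree of freedom is the choice of factorization. The approach I would take is to pick, among all $B$-factorizations $G_1*\cdots*G_n=F$ (one exists since $F\in\langle B\rangle$), one of \emph{minimal length}, let $U$ be its $(1)$--$(5)$ normal form, so that $U\hooktto_{6,7}F$, and argue that minimality forbids collapsible nodes in $U$. Building the product left to right, grafting a factor onto a leaf either inserts a reduced subtree or, on overshoot, a shift; a collapsible node $\langle Ls,Rs\rangle$ can be created only at a node whose two leaf children are carried by the grafted factor to the overshoot shifts $Ls$ and $Rs$. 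The goal is to read off, from such a node, a consecutive block $G_i*\cdots*G_j$ that is redundant and can be deleted to give a strictly shorter $B$-product still equal to $F$, contradicting minimality. When $U$ is collapse-free we have $U=F$ and $G_1*\cdots*G_n\hooktto_{1,2,3,4,5}F$, as required.

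The hard part is exactly this last deletion step. A $(6)$-collapse is produced by a \emph{local} grafting event, and a surviving collapsed node only certifies that some consecutive block of factors restricts to the identity on a single cone, which is strictly weaker than that block being equal to $I$ globally; minimality kills global sub-identities but not cone-local ones. Converting ``local identity on a cone'' into ``removable consecutive factors'' is where the real work sits, and it is complicated by the fact that later graftings can rewrite or destroy earlier collapsible nodes, so one must choose which collapse to trace and show the trace reaches a genuinely deletable block. If minimality proves too blunt, the fallback I would use is direct surgery: eliminate collapsible nodes one at a time by replacing the offending factor with an equal $B$-product that performs the grafting without the spurious split, using the antichain description to guarantee such a replacement exists. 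Either way the degenerate outcome is the root node $\langle L,R\rangle$, whose removal empties the product and returns no nonempty $F_1*\cdots*F_n$; that case is exactly $F=I$, which is why the statement must exclude it, while for every $F\neq I$ the construction terminates in a nonempty collapse-free product.
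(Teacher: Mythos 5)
Your preliminaries are fine: the antichain reading of right invertibility agrees with the paper's characterization (no shift of $F$ is a final segment of another shift of $F$), and your observation that a $(7)$-collapse creates an $I$-leaf which can never be absorbed afterwards, so that it can only survive when $F=I$, is essentially the same remark the paper makes. But the proof stops exactly where you say it does: at the elimination of $(6)$-collapses. Your primary route (minimal-length factorization plus deletion of a redundant consecutive block) founders on the obstruction you yourself identify — a surviving collapsible node $\langle L{*}s, R{*}s\rangle$ only witnesses that part of the product acts as the identity on one cone, which does not make any consecutive block of factors globally equal to $I$, so minimality yields no contradiction. Your fallback (``direct surgery,'' replacing the offending factor by an equal $B$-product without the spurious split) is a hope, not an argument: nothing guarantees such a replacement exists inside the submonoid generated by $B$, and you offer no reason. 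So the central claim of the lemma is never established in your proposal.

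The paper closes this gap with two ideas your outline does not contain. First, it applies standardization (Theorem~1(3)) to an \emph{arbitrary} factorization, $F_1 * \cdots * F_n \hooktto_{1,2,3,4,5} G \hooktto_{5,6,7} F$, and analyzes the shape of $G$: it is the tree of $F$ in which each leaf shift $S$ is replaced by the CQ normal form of $H * S$ for some $H$ in the kernel $J$ of the homomorphism from CQ onto CM, where kernel elements have a rigid structure (the leaf at path $p$ carries exactly the shift $p$, and no proper final segment of $p$ maps to a shift under $H$). Second — and this is the step your argument never touches — it uses right invertibility of the \emph{factors} $F_j$, not just of $F$: since every $F_j$ is in RI, among the navigations $S' * S''$ extending a leaf position $S''$ of $F$ with the property that every partial product $S' * S'' * F_1 * \cdots * F_i$ is a shift in CQ, there is a unique shortest one; a nontrivial kernel element $H$ sitting at that leaf would produce at least two (one per leaf of $H$). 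Hence every $H$ is trivial and $G = F$ — for every factorization, not merely a minimal one. That your argument invokes the RI hypothesis only for $F$ and never for the factors is the telltale sign it cannot close: the antichain property of the $F_j$ is the engine that converts the cone-local identity you got stuck on into genuine triviality.
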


\begin{proof}
First recall the characterization of the elements in RI given in~\cite{statman96}. $F$ is in RI if and only if no shift
of $F$ is a final segment (left to right) of another shift of $F$. Now suppose that
\mbox{$F_ {1} *...*F_{n} \hooktto_{1,2,3,4,5,6,7} F$}.
By Theorem 1 (3) there exists $G$ s.t.
$F_ {1} *...*F_{n} \hooktto_{1,2,3,4,5} G \hooktto_{5,6,7} F$. Note that in the rewrite from $G$ to $F$ all the applications of (6) are to cancel a shift. Now let $J$ be the
kernel of the unique homomorphism of CQ on CM. $J$ is precisely the set of all $H$ in CQ s.t. $H = I$
in CM. Now consider the shape of $G$. The binary tree of $G$ begins with the tree of $F$ but where a
shift $S$ of $F$ would occur is the CQ normal form of a member of $J * S$. Note that such an $S$ is not $I$
since $F$ is in RI, but there is the trivial case when the member of $J$ is $I$.

Now each member $H$ of $J$
has the following structure. Each leaf of the binary tree of $H$ can be described by a sequence of
$L$s and $R$s read from right to left. The corresponding shift $S$ has the property that the normal
form of $S* H$ is precisely $S$, and no final segment (left to right) $S''$ of $S'$ has the property that
$S''*H$ equals a shift in CQ. Now let $H$ be the member of $J$ for $S$ a shift of $F$. We have for each shift $S'$
of $H$ s.t. if $S$ is in position $S''$, then $S' * S'' * G = S' * S$ in CQ.
Now, for each $i=1,..., n$, $S'' * S' * F_ {1} *...*F_{i}$ equals a shift in CQ.
However, since all the $F_{j}$ are in RI there is a unique shortest $S' * S$ with this property s.t.
$S' * S'' * G = S' * S$ in CQ. But if $H$ is not trivial then there are at least two such. Thus $H$ is trivial.
Hence $G = F$.
\end{proof}

\begin{corollary*}
Theorems 3, 4 and 5 apply to RI, and thus to the Thompson-Higman groups.
\end{corollary*}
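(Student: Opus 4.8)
The plan is to use Lemma~2 as a bridge reducing every CM-question about RI to the corresponding CQ-question already settled by Theorems~3, 4 and~5. Those theorems were proved for CQ by running the G.G.H.\ machinery on CQ (that is, $(1)$--$(5)$) normal forms, whereas membership, equality, and the property of being a shift in RI are governed by the full CM rewriting $(1)$--$(7)$. The content of the corollary is thus that, as long as we stay inside RI, the two rewriting systems agree, so the machines may be reused without change.

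First I would record the elementary fact that RI is a submonoid of CM: a product of right invertible elements is right invertible, so the submonoid generated by a finite set $B$ of RI elements is contained in RI. Now the proof of Lemma~2 in fact establishes more than its statement: given any product $F_{1}*\cdots*F_{n}$ of members of $B$, its CM normal form $F$ is again an RI element (right invertibility being a property of the CM class), and the chain $F_{1}*\cdots*F_{n}\hooktto_{1,2,3,4,5}G\hooktto_{5,6,7}F$ with $G=F$ shows that the CQ normal form $G$ of the product already equals its CM normal form. Thus, for RI products the rules $(6)$ and $(7)$ are never needed, and CM-equality of two products of $B$ coincides with equality of their CQ normal forms --- the converse direction being immediate from the homomorphism of CQ onto CM.

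With this identification the three transfers are routine. For Theorem~3, an element $F$ of RI is a product of members of $B$ in CM iff, by Lemma~2, some product of members of $B$ has $F$ as its CQ normal form, which is exactly the predicate decided by the construction of Theorem~3 applied to the CQ normal forms of $B$ and $F$. For Theorem~4, the distinct elements of the submonoid generated by $B$ are in bijection with the distinct CQ normal forms of products of members of $B$, so infiniteness in CM is equivalent to infiniteness in CQ, which Theorem~4 decides. For Theorem~5, the covering condition refers only to whether the successive CQ normal forms of $S*F_{1}*\cdots*F_{n}$ are shifts; each such expression is an RI product, so its CM and CQ normal forms agree and ``is a shift'' means the same in both settings, whence Theorem~5 applies verbatim.

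The main obstacle is entirely concentrated in the bridging step and has already been discharged by Lemma~2: one must exclude the possibility that the extra rules $(6)$ and $(7)$ collapse a product further than $(1)$--$(5)$ do, which would make the $(1)$--$(5)$-based machines blind to genuine CM-identifications. The RI hypothesis is precisely what forecloses this, since the characterization of RI --- no shift being a final segment, read left to right, of another --- forces the relevant kernel element of the homomorphism of CQ onto CM to be trivial. Finally, the elements of the Thompson--Higman groups are invertible, hence right invertible and so lie in RI; the questions of Theorems~3, 4 and~5 for finitely generated submonoids of these groups are therefore special cases of the RI results.
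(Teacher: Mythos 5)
Your proposal is correct and follows essentially the same route the paper intends: the corollary is stated without proof immediately after Lemma~2, precisely because that lemma (via $G=F$, i.e.\ the triviality of the kernel element $H$) shows that for products of RI elements the CQ normal form already is the CM normal form, so the $(1)$--$(5)$-based G.G.H.\ machinery of Theorems~3, 4 and~5 transfers verbatim, and Thompson--Higman group elements lie in RI by right invertibility. Your write-up merely makes explicit the transfer steps the paper leaves implicit (modulo the minor $F=I$ exception in Lemma~2, which neither you nor the paper dwells on).
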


\nocite{*}
\bibliographystyle{eptcs}
\bibliography{generic}
\end{document}